\documentclass[11pt]{article}

\usepackage[margin=1in]{geometry}
\usepackage{amsmath,amssymb,amsthm,mathtools}
\usepackage{comment}
\usepackage{enumitem}
\usepackage{microtype}
\usepackage{xurl}
\usepackage[hidelinks]{hyperref}
\usepackage[nameinlink,noabbrev]{cleveref}
\crefname{lemma}{Lemma}{Lemmas}
\crefname{theorem}{Theorem}{Theorems}
\crefname{proposition}{Proposition}{Propositions}
\crefname{corollary}{Corollary}{Corollaries}
\crefname{definition}{Definition}{Definitions}
\crefname{remark}{Remark}{Remarks}

\newtheorem{theorem}{Theorem}[section]
\newtheorem{lemma}[theorem]{Lemma}
\newtheorem{proposition}[theorem]{Proposition}
\newtheorem{corollary}[theorem]{Corollary}

\theoremstyle{definition}
\newtheorem{definition}[theorem]{Definition}

\theoremstyle{remark}
\newtheorem{remark}[theorem]{Remark}

\newcommand{\Ls}{L_s}
\newcommand{\Tr}{\operatorname{Tr}}
\newcommand{\diam}{\operatorname{diam}}
\newcommand{\rank}{\operatorname{rank}}
\newcommand{\dist}{\operatorname{dist}}
\newcommand{\Lip}{\operatorname{Lip}}

\newcommand{\cE}{\mathcal{E}}

\newcommand{\elltwo}{\ell_2}
\newcommand{\Ds}{D_s}

\title{Euclidean Embeddings of Acyclic Hypergraph Line Metrics}
\author{Nicholas Popescu}
\date{}

\begin{document}
\maketitle

\begin{abstract}
We study Euclidean embeddings of higher-order line metrics arising from hypergraphs. For a fixed threshold \(s\geq 1\), the \(s\)-line graph \(L_s(H)\) has one vertex for each hyperedge of \(H\), with two hyperedges adjacent when they intersect in at least \(s\) vertices; its shortest-path metric is the native \(s\)-line metric. In the unrestricted setting, every finite graph occurs as the \(s\)-line graph of some hypergraph, so the worst-case Euclidean distortion is \(\Theta(\log n)\). We show that \(\alpha\)-acyclicity yields a sharp improvement to \(\Theta(\sqrt{\log\log n})\). Although the canonical join tree of an \(\alpha\)-acyclic hypergraph may have arbitrarily large stretch relative to the native metric, we prove that for every \(s\), the \(s\)-line graphs of \(\alpha\)-acyclic hypergraphs are exactly the dually chordal graphs. Using their constant-factor spanning-tree approximations together with classical tree-embedding bounds, we obtain an \(O(\sqrt{\log\log n})\) Euclidean embedding bound. We then show this is asymptotically tight by realizing every finite tree as the \(s\)-line graph of an \(\alpha\)-acyclic hypergraph. Thus, the worst-case distortion is \(\Theta(\sqrt{\log\log n})\), including for hypergraphs of rank at most \(3s+1\).
\end{abstract}

\section{Introduction}
\label{sec:introduction}

A fundamental question in the geometry of finite metric spaces is how faithfully a metric can be represented in a more structured space. A low-distortion embedding replaces the graph metric by distances in a normed space while controlling the multiplicative change in every
pairwise distance. By approximately preserving distances in a more structured space, metric embeddings allow algorithms to exploit geometric properties that may be difficult to access in the original metric, providing a powerful tool for approximation and optimization \cite{linial1995geometry}. Bourgain's theorem established that every $n$-point metric space admits an embedding into Hilbert space with distortion $O(\log n)$ \cite{bourgain1985lipschitz}. As this bound is asymptotically optimal for general metrics, a universal improvement is impossible without additional structure in the source metric \cite{linial1995geometry}. For trees, $n$-vertex tree metrics admit Euclidean embeddings with distortion $O(\sqrt{\log\log n})$, with the corresponding lower bound attained by complete binary trees \cite{matousek1999embedding, bourgain1986superreflexivity,linial2003binary}. We investigate the Euclidean embedding behavior of hypergraph metrics, and whether structural restrictions on hypergraphs yield bounds analogous to those for structured graph families.

Hypergraphs provide a natural language for systems with higher-order interactions. Unlike ordinary graphs, hypergraphs do not come with a unique canonical notion of distance. Our objects of interest are the hyperedges, and a natural way to find distance between two hyperedges is by how strongly they overlap and by whether one can move between them through a sequence of overlapping hyperedges. For a fixed \(s\ge1\), we say that two hyperedges are \(s\)-adjacent when they share at least \(s\) vertices. The graph whose vertices are the hyperedges and whose edges record this relation is the \(s\)-line graph \(L_s(H)\). Its shortest-path metric therefore measures how many steps across edges with enough overlap are needed to pass from one hyperedge to another.

We first show that connected native $s$-line metrics of unrestricted hypergraphs are exactly finite connected graph metrics. Thus, despite their greater combinatorial generality, hypergraphs introduce no additional worst-case Euclidean distortion under the native $s$-line metric. This motivates studying whether structural restrictions yield sharper bounds.

We focus on $\alpha$-acyclic hypergraphs, which arose in database theory \cite{beeri1983desirability,fagin1983degrees}. An $\alpha$-acyclic hypergraph admits a join tree, a tree on the hyperedges in which all hyperedges containing any fixed vertex form a connected subtree. This structure is the basis of algorithms for acyclic database schemes and
conjunctive queries \cite{gottlob2001hypertree}. Thus, $\alpha$-acyclic hypergraphs form a natural tree analogue of ordinary graphs. Since their hyperedges can be organized into a tree that preserves the connectivity of shared vertices, this structure suggests adapting methods for tree metrics to $\alpha$-acyclic hypergraphs. In particular, we ask whether the native $s$-line metric
is well approximated by the join tree.

The first part of this paper shows that this intuition is only partly correct. For every connected component $C$ of $\Ls(H)$, the restriction of any join tree to $C$ is indeed a spanning tree of the native graph. Moreover, we identify its exact multiplicative stretch in terms of the diameters of $s$-traces in the join tree. However, a join tree need not be a good metric approximation. Specifically, there are $\alpha$-acyclic hypergraphs for which the native $s$-line graph is a clique while every join tree is a path. Thus, tree representation can incur stretch linear in the number of hyperedges even when the native metric itself is trivial. Instead, we show that the native graph belongs to a well-studied graph class with a stronger metric property. Brandst\"adt, Dragan, Chepoi, and Voloshin proved that the ordinary line graphs of \(\alpha\)-acyclic hypergraphs are exactly the dually chordal graphs \cite{brandstadt1998dually}. We extend this characterization to
line graphs of arbitrary order.

\begin{theorem}
\label{thm:intro-characterization}
Fix \(s\ge1\).  A finite graph \(G\) is dually chordal if and only if there
exists an \(\alpha\)-acyclic hypergraph \(H\) such that
\[
    G\cong \Ls(H).
\]
\end{theorem}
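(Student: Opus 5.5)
We reduce both directions to the case \(s=1\), where the theorem is the characterization of Brandst\"adt, Dragan, Chepoi, and Voloshin \cite{brandstadt1998dually}: a graph is dually chordal if and only if it is the ordinary line graph \(L_1(H')\) of some \(\alpha\)-acyclic hypergraph \(H'\). We recall two facts about \(\alpha\)-acyclicity: it is equivalent to the existence of a join tree, and it is equivalent to the hypergraph being conformal with a chordal primal graph.

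\emph{Sufficiency (realization).} Let \(G\) be dually chordal, and take \(H'\) \(\alpha\)-acyclic with \(L_1(H')\cong G\). We \emph{blow up} each vertex of \(H'\): replace every vertex \(v\) by \(s\) fresh copies \(v^{(1)},\dots,v^{(s)}\), and replace each hyperedge \(e\) by \(\widehat e=\{v^{(i)}: v\in e,\ 1\le i\le s\}\). Then \(|\widehat e\cap\widehat f|=s\,|e\cap f|\), so \(\widehat e,\widehat f\) are \(s\)-adjacent exactly when \(e\cap f\neq\emptyset\). Hence \(L_s(\widehat H)\cong L_1(H')\cong G\). A join tree of \(H'\) is also a join tree of \(\widehat H\), because the hyperedges containing \(v^{(i)}\) are exactly those that contained \(v\). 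So \(\widehat H\) is \(\alpha\)-acyclic. If \(H'\) has repeated hyperedges, distinct hyperedges are handled as a multiset, or we add private vertices to separate them; private vertices preserve both acyclicity and \(s\)-adjacency.

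\emph{Necessity.} Let \(H\) be \(\alpha\)-acyclic and \(s\ge 1\). The plan is to build an \(\alpha\)-acyclic \(H^{(s)}\) whose ordinary line graph equals \(L_s(H)\). Take as new vertex set the \(s\)-subsets \(S\) of \(V(H)\) that lie in some hyperedge. Map each hyperedge \(e\) to \(e^{(s)}=\binom{e}{s}\). Then \(e^{(s)}\cap f^{(s)}\ne\emptyset\) if and only if \(|e\cap f|\ge s\), so \(L_1(H^{(s)})=L_s(H)\); hyperedges of size less than \(s\) become empty and give isolated vertices, which we can also give private vertices. It remains to show \(H^{(s)}\) is \(\alpha\)-acyclic, and we claim a join tree \(T\) of \(H\) is a join tree of \(H^{(s)}\). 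The hyperedges containing an \(s\)-set \(S\) are those containing every element of \(S\). This is an intersection of subtrees of \(T\), and an intersection of subtrees of a tree is a subtree, so it is connected; it is nonempty by the choice of vertex set. Hence \(T\) works, and \(L_s(H)\) is dually chordal by \cite{brandstadt1998dually}.

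The main obstacle is bookkeeping rather than an idea. We must check the exact definitions used for \(\alpha\)-acyclicity: whether empty or repeated hyperedges are allowed, and whether \(L_s\) for \(s\) larger than some edge sizes introduces degenerate isolated vertices. We must also confirm that the cited theorem is stated for multihypergraphs, or else adjust with private vertices; adding such vertices never breaks join trees and never creates new intersections. Disconnected \(G\) is fine, since dually chordal graphs and join trees both handle disjoint unions, which are compatible with join forests connected arbitrarily.
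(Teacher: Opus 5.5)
Your proposal is correct and follows the paper's proof essentially verbatim: the \(s\)-fold vertex blow-up \(A\mapsto\bigcup_{x\in A}B_x\) for the realization direction, and the hypergraph on \(s\)-subsets \(\widehat e=\binom{e}{s}\) for the other direction, where the original join tree survives because each trace is an intersection of subtrees, with both directions reduced to the Brandst\"adt--Dragan--Chepoi--Voloshin line-graph characterization. The only difference is that the paper attaches a private vertex \(p_e\) to every \(\widehat e\) from the start, rather than only to degenerate ones. This settles the small-hyperedge and distinctness bookkeeping you flag, since a private vertex lies in one hyperedge only and so cannot break the join tree.
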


Further, every connected dually chordal graph admits a spanning tree with multiplicative stretch at most \(4\) \cite{brandstadt1999distance}.  Combining this with Matou\v{s}ek's \(O(\sqrt{\log\log n})\) Euclidean embedding bound for \(n\)-point tree metrics \cite{matousek1999embedding} yields the corresponding upper bound for native \(s\)-line metrics of \(\alpha\)-acyclic hypergraphs.

Our main result shows that this upper bound is asymptotically tight.

\begin{theorem}
\label{thm:intro-main}
Fix \(s\ge1\), and let
\[
  \Ds(n)
  :=
  \sup\left\{
    c_2(C,d_s):
    \begin{array}{l}
      H\text{ is }\alpha\text{-acyclic},\\
      C\text{ is a connected component of }\Ls(H),\\
      |C|=n
    \end{array}
  \right\}.
\]
Then
\[
    \Ds(n)
    =
    \Theta\!\left(\sqrt{\log\log n}\right).
\]
Further, the lower bound already holds for hypergraphs of rank at most
\(3s+1\).
\end{theorem}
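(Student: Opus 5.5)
The plan splits into the upper bound and the lower bound.

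\textbf{Upper bound.} Let \(H\) be \(\alpha\)-acyclic and \(C\) a connected component of \(\Ls(H)\) with \(|C|=n\). By \Cref{thm:intro-characterization}, \(\Ls(H)\) is dually chordal. Induced subgraphs of a graph need not stay dually chordal, so we instead use the fact that a connected component of a graph is dually chordal exactly when the graph is. This holds because a maximum neighbourhood ordering of \(\Ls(H)\), restricted to the component, is still a maximum neighbourhood ordering, since neighbourhoods never cross components. Alternatively, \(C\cong \Ls(H')\) where \(H'\) consists of the hyperedges of \(C\), and \(H'\) is \(\alpha\)-acyclic because it is a subfamily closed under the restriction of a join tree. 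By the cited result of Brandst\"adt, Dragan, Chepoi and Voloshin, \(C\) has a spanning tree \(T\) with
\[
d_s(x,y)\le d_T(x,y)\le 4\,d_s(x,y)\quad\text{for all }x,y\in C.
\]
The identity map \((C,d_s)\to(C,d_T)\) therefore has distortion at most \(4\). Composing it with Matou\v{s}ek's embedding of the \(n\)-point tree metric \(d_T\) into \(\elltwo\) with distortion \(O(\sqrt{\log\log n})\) gives \(c_2(C,d_s)\le 4\cdot O(\sqrt{\log\log n})\), uniformly in \(H\).

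\textbf{Lower bound.} For every finite tree \(T\) with maximum degree \(\Delta\le 3\), we build an \(\alpha\)-acyclic hypergraph \(H_T\) of rank at most \(3s+1\) with \(\Ls(H_T)\cong T\). Applying this to the complete binary tree \(B_k\) on \(n=2^{k+1}-1\) vertices, Bourgain's lower bound gives \(c_2(B_k)=\Omega(\sqrt{\log k})=\Omega(\sqrt{\log\log n})\). For values of \(n\) that are not of this form, we attach a path to a smaller complete binary tree. This keeps the tree of maximum degree \(3\), and since \(c_2\) is monotone under taking isometric subsets, the bound extends to all \(n\).

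The construction goes as follows. For each tree edge \(e\), fix a private set \(S_e\) of \(s\) fresh vertices. For each tree vertex \(v\), fix one private vertex \(p_v\), and set
\[
h_v=\{p_v\}\cup\bigcup_{e\ni v}S_e .
\]
Then \(|h_v|\le 3s+1\). If \(uv\) is a tree edge, then \(h_u\cap h_v=S_{uv}\) has size \(s\). Otherwise the intersection is empty, since a set \(S_e\) is shared only by the two endpoints of \(e\). Hence \(\Ls(H_T)=T\). The private vertices \(p_v\) keep the hyperedges distinct and non-nested.

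For \(\alpha\)-acyclicity, we take \(T\) itself as the join tree. Each vertex of \(S_e\) lies in exactly the two hyperedges \(h_u,h_v\) with \(e=uv\), which form an edge of \(T\). Each \(p_v\) lies in a single hyperedge. So the running-intersection (connected subtree) condition holds.

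\textbf{Main obstacle.} Once \Cref{thm:intro-characterization} is available, the remaining obstacle is mostly bookkeeping: justifying that the spanning-tree result applies to a single component, and padding to arbitrary \(n\). I would double-check the component issue first. I expect the cleanest route to be the observation that maximum neighbourhood orderings restrict to connected components.
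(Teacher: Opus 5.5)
Your proposal is correct and is essentially the paper's proof. It reduces to the component via the restricted hypergraph on $C$: the paper justifies that this restriction is $\alpha$-acyclic through the connectivity of $J[C]$ in Lemma~\ref{lem:component-subtree}, which is the fact your second route tacitly needs (an arbitrary subfamily of an $\alpha$-acyclic hypergraph need not be $\alpha$-acyclic), and your maximum-neighbourhood-ordering restriction is an equally valid substitute. It then applies the multiplicative $4$-spanning tree of \cite{brandstadt1999distance} with Matou\v{s}ek's tree embedding for the upper bound, and for the lower bound uses the same realization $E_u=\{p_u\}\cup\bigcup_{uv\in F}S_{uv}$ with the tree itself as join tree, applied to a padded complete binary tree. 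The one detail to state explicitly is that the pendant path is attached at a leaf of the largest $B_h$ with $2^{h+1}-1\le n$, as the paper does, so that both $\Delta\le 3$ and $2^{h+1}-1=\Theta(n)$ hold.
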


To prove the matching lower bound, we show that every finite tree can itself occur as the \(s\) line graph of an \(\alpha\)-acyclic hypergraph. Specifically, every finite tree \(Q\) occurs as the \(s\)-line graph of an \(\alpha\)-acyclic hypergraph \(H_{Q,s}\), while the rank of the construction
is controlled by
\[
    \rank(H_{Q,s})\le 1+s\Delta(Q).
\]
Applying this construction to complete binary trees transfers the classical Hilbert-space lower bound \(\Omega(\sqrt{\log\log n})\) \cite{bourgain1986superreflexivity,linial2003binary} to the class of native \(s\)-line metrics.  Since complete binary trees have maximum degree at most \(3\), the same construction gives the rank bound \(3s+1\).

\paragraph{Organization.}
Section~\ref{sec:preliminaries} introduces the native \(s\)-line metric, explains why we work on hyperedges rather than on vertices, and shows that unrestricted hypergraphs recover all finite graph metrics.  Section~\ref{sec:join-tree-comparison} compares the native metric with a join tree and characterizes the exact stretch of the canonical tree. Section~\ref{sec:canonical-limitations} gives examples showing that this stretch can be arbitrarily large.  We then characterize higher-order line graphs of \(\alpha\)-acyclic hypergraphs in terms of dually chordal graphs and derive the Euclidean upper bound.  Finally, Section~\ref{sec:tightness} proves the matching lower bound through explicit bounded-rank realizations of tree metrics.

\section{Preliminaries}
\label{sec:preliminaries}

In the following work, all hypergraphs and graphs are finite. We write a hypergraph as \(H=(V,E)\), where \(E\subseteq 2^V\) is a set of distinct hyperedges. The rank of \(H\) is
\[
    \rank(H):=\max_{e\in E}|e|.
\]
We fix an integer \(s\ge1\) throughout the paper.

\subsection{Higher-order line graphs and the native metric}
\label{subsec:s-line-metric}

The ordinary line graph of a hypergraph has one vertex for each hyperedge and joins two hyperedges whenever they intersect. To study different amounts of overlap, we study the \(s\)-line graph \cite{liu2022highorder,preti2024hyper}.

\begin{definition}[\(s\)-line graph]
\label{def:s-line}
The \emph{\(s\)-line graph} \(\Ls(H)\) has vertex set \(E\), with distinct
hyperedges \(e,f\in E\) adjacent if and only if
\[
    |e\cap f|\ge s.
\]
\end{definition}

The graph \(\Ls(H)\) is not necessarily connected. In particular, every hyperedge of size less than \(s\) is isolated. For a connected component \(C\) of \(\Ls(H)\), write
\[
    G_C:=\Ls(H)[C]
\]
and define
\[
    d_s(e,f):=d_{G_C}(e,f),
    \qquad e,f\in C.
\]
Thus \(d_s(e,f)\) is the minimum number of \(s\)-adjacency steps needed to move from \(e\) to \(f\). We call \(d_s\) the \emph{native \(s\)-line metric}. Metric statements throughout the paper are made componentwise.

\subsection{Why a metric on hyperedges?}
\label{subsec:why-hyperedges}

Higher-order connectivity also gives rise to natural notions of distance between vertices of a hypergraph. For distinct \(u,v\in V\), we define
\[
    \delta_s(u,v)
    :=
    \min_{\substack{e\ni u\\f\ni v}}
    \bigl(d_s(e,f)+1\bigr),
\]
with \(\delta_s(u,u)=0\), whenever the relevant hyperedges lie in a common \(s\)-connected component. Analogous vertex-to-hyperedge distances may be defined in the same way \cite{preti2024hyper}.

The hyperedge distance \(d_s\) is the ordinary shortest path metric of \(\Ls(H)\), and thus is a metric on each connected component. On the other hand, for \(s>1\), the corresponding vertex distances need not satisfy the triangle inequality and are in general only semimetrics \cite{preti2024hyper}. Since our goal is to study bi-Lipschitz embeddings into Euclidean space, we take the hyperedges as the points of the metric space.

\subsection{Euclidean distortion}
\label{subsec:distortion}

We use the standard bi-Lipschitz notion of metric distortion \cite{linial1995geometry}. If \((X,d_X)\) is a finite metric space and \(F:X\to\elltwo\) is injective, define
\[
    \Lip(F)
    :=
    \max_{x\neq y}
    \frac{\|F(x)-F(y)\|_2}{d_X(x,y)}
\]
and
\[
    \Lip(F^{-1})
    :=
    \max_{x\neq y}
    \frac{d_X(x,y)}{\|F(x)-F(y)\|_2}.
\]
The distortion of \(F\) is
\[
    \dist(F):=\Lip(F)\Lip(F^{-1}),
\]
and the least Euclidean distortion of \(X\) is
\[
    c_2(X):=
    \inf_{F:X\hookrightarrow\elltwo}\dist(F).
\]
For a graph \(G\), \(c_2(G)\) denotes the Euclidean distortion of its shortest-path metric.

Since distortion is invariant under rescaling, an embedding has distortion at most \(D\) if and only if it can be rescaled so that
\[
    d_X(x,y)
    \le
    \|F(x)-F(y)\|_2
    \le
    D\,d_X(x,y)
\]
for all \(x,y\in X\).

\subsection{Unrestricted hypergraphs recover arbitrary graph metrics}
\label{subsec:unrestricted}

Before imposing acyclicity, we observe that \(s\)-line metrics of arbitrary hypergraphs behave like graph metrics.

\begin{proposition}[Graph realization]
\label{prop:general-graph-realization}
For every integer \(s\ge1\) and every finite simple graph \(G=(U,F)\), there exists a hypergraph \(H_G^{(s)}\) such that
\[
    \Ls\!\left(H_G^{(s)}\right)\cong G.
\]
Therefore, if \(G\) is connected, the native \(s\)-line metric of \(H_G^{(s)}\) is isometric to the shortest-path metric of \(G\).
\end{proposition}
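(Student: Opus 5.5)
We build the hypergraph explicitly, with edge-labelled vertices encoding adjacency. For each edge \(\{u,v\}\in F\) we introduce \(s\) fresh vertices \(x_{uv}^{1},\dots,x_{uv}^{s}\), one set per unordered edge. For each \(u\in U\) we add one private vertex \(p_u\), and define the hyperedge
\[
    e_u:=\{p_u\}\cup\bigcup_{v:\{u,v\}\in F}\{x_{uv}^{1},\dots,x_{uv}^{s}\}.
\]
The vertex set is the union of all these symbols, and we set \(H_G^{(s)}:=(V,\{e_u:u\in U\})\).

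First we check that the hyperedges are distinct, so that \(u\mapsto e_u\) is a bijection onto \(E\). The private vertex \(p_u\) lies only in \(e_u\), so \(e_u\neq e_w\) for \(u\neq w\). This is the reason for including \(p_u\): it handles isolated vertices of \(G\), and more generally any two vertices that would otherwise receive equal hyperedges. Next we compute intersections. For \(u\neq w\), a symbol \(x_{ab}^{i}\) lies in \(e_u\) exactly when \(u\in\{a,b\}\), and private vertices are never shared. Hence
\[
    e_u\cap e_w=
    \begin{cases}
        \{x_{uw}^{1},\dots,x_{uw}^{s}\}, & \{u,w\}\in F,\\
        \emptyset, & \text{otherwise}.
    \end{cases}
\]
So \(|e_u\cap e_w|\ge s\) if and only if \(\{u,w\}\in F\), and the bijection \(u\mapsto e_u\) is a graph isomorphism \(G\cong\Ls(H_G^{(s)})\).

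For the metric statement, suppose \(G\) is connected. Then \(\Ls(H_G^{(s)})\) is connected, so it forms a single component \(C\), and \(d_s\) is by definition the shortest-path metric of \(G_C=\Ls(H_G^{(s)})\). A graph isomorphism preserves shortest-path distances, so the bijection \(u\mapsto e_u\) is an isometry onto \((C,d_s)\).

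None of these steps is a real obstacle. The only point needing care is distinctness of hyperedges, since \(E\) is required to be a set. The private vertices settle this uniformly, including the degenerate case of isolated vertices, where \(e_u\) would otherwise be empty.
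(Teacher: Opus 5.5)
Your proof is correct and follows essentially the same construction as the paper: $s$ fresh vertices per edge of $G$, a private vertex per vertex of $G$, and hyperedges formed as unions of the incident blocks. The intersection computation, the use of private vertices for distinctness, and the isometry argument via graph isomorphism all match the paper's proof.
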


\begin{proof}
For every edge \(uv\in F\), let \(B_{uv}\) be a set of exactly \(s\) fresh vertices, with all blocks \(B_{uv}\) pairwise disjoint. For each \(u\in U\), introduce a private vertex \(p_u\), distinct from all other private vertices and from all vertices in the blocks, and define a hyperedge
\[
    E_u
    :=
    \{p_u\}\cup\bigcup_{uv\in F} B_{uv}.
\]
Let
\[
    H_G^{(s)}
    :=
    \left(
        \{p_u:u\in U\}\cup\bigcup_{uv\in F}B_{uv},
        \{E_u:u\in U\}
    \right).
\]
The private vertices ensure that the hyperedges \(E_u\) are pairwise distinct, including when \(G\) has isolated vertices.

For distinct \(u,v\in U\), if \(uv\in F\), then
\[
    E_u\cap E_v=B_{uv},
\]
and thus
\[
    |E_u\cap E_v|=s.
\]
If \(uv\notin F\), then no block belongs to both \(E_u\) and \(E_v\), so
\[
    E_u\cap E_v=\varnothing.
\]
Therefore
\[
    |E_u\cap E_v|\ge s
    \iff
    uv\in F.
\]
The correspondence \(u\mapsto E_u\) is thus a graph isomorphism
\[
    G\cong \Ls\!\left(H_G^{(s)}\right).
\]
When \(G\) is connected, graph isomorphism preserves shortest-path distance, so the two metric spaces are isometric.
\end{proof}

\begin{corollary}
\label{cor:all-graph-metrics}
For every fixed \(s\ge1\), the class of connected native \(s\)-line metrics of arbitrary hypergraphs is exactly the class of finite connected graph metrics.
\end{corollary}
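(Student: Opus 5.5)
The corollary asserts an equality of two classes of metric spaces, so I will prove the two inclusions separately.

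For the inclusion of native metrics into graph metrics, let \(H\) be any hypergraph and \(C\) a connected component of \(\Ls(H)\). By definition, \(d_s\) on \(C\) is the shortest-path metric of the finite connected simple graph \(G_C=\Ls(H)[C]\). Here \(G_C\) is finite because \(E\) is finite, and it is simple because adjacency is defined only between distinct hyperedges. So every connected native \(s\)-line metric is a finite connected graph metric, and this direction is immediate.

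For the reverse inclusion, let \(G\) be a finite connected graph. I will invoke \Cref{prop:general-graph-realization} to obtain \(H_G^{(s)}\) with \(\Ls(H_G^{(s)})\cong G\). Since \(G\) is connected, \(\Ls(H_G^{(s)})\) has a single component \(C\), namely all hyperedges. The proposition already states that its native metric is isometric to \(d_G\), so \((G,d_G)\) occurs as a connected native \(s\)-line metric.

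The only point needing care is bookkeeping. The ``class'' should be understood up to isometry, and one should check that the component of \(\Ls(H)\) in question is the whole line graph, so that \(d_s\) is defined on all of \(C\). Since \(G\) is connected, this holds. I expect no real obstacle; the proof is a two-line combination of the definition of \(d_s\) with \Cref{prop:general-graph-realization}.
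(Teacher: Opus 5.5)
Your proposal is correct and follows the paper's proof: the inclusion into graph metrics is immediate from the definition of \(d_s\) on a component \(G_C\), and the reverse inclusion comes from \cref{prop:general-graph-realization}. Your additional bookkeeping (finiteness and simplicity of \(G_C\), and connectedness of \(G\) forcing \(\Ls(H_G^{(s)})\) to be a single component) only spells out what the paper's two-line argument leaves implicit.
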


\begin{proof}
Every connected native \(s\)-line metric is, by definition, the shortest-path metric of a connected graph, namely a connected component of \(\Ls(H)\). Therefore, by Proposition~\ref{prop:general-graph-realization}, every finite connected graph \(G\) is isomorphic to \(\Ls(H)\) for some hypergraph \(H\). So every finite connected graph metric occurs as a native \(s\)-line metric.
\end{proof}

In particular, unrestricted hypergraphs admit no better worst-case Euclidean embedding guarantee than arbitrary graph metrics. There exist \(n\)-vertex graph metrics requiring Euclidean distortion
\(\Omega(\log n)\) \cite{linial1995geometry}, while Bourgain's theorem gives an \(O(\log n)\) Euclidean embedding for every \(n\)-point metric space. 
Thus,
\[
    \sup_{\substack{H,\ C\text{ a component of }\Ls(H)\\|C|=n}}
    c_2(C,d_s)
    =
    \Theta(\log n).
\]
Thus any improvement over the general metric bound must come from adding structure to the hypergraph.

\subsection{Alpha-acyclicity and join trees}
\label{subsec:alpha-acyclicity}

There are several inequivalent notions of hypergraph acyclicity. We work with \(\alpha\)-acyclicity, the notion arising in relational database theory \cite{beeri1983desirability,fagin1983degrees}. \(\alpha\)-acyclic hypergraphs form a natural tree-like class of hypergraphs since hyperedges admit a join tree representation.

\begin{definition}[Join tree]
\label{def:join-tree}
A \emph{join tree} for \(H=(V,E)\) is a tree \(J\) whose vertex set is \(E\)
and such that, for every \(x\in V\), the set
\[
    E_x:=\{e\in E:x\in e\}
\]
induces a connected subtree of \(J\). Equivalently, whenever
\(x\in e\cap f\), every hyperedge on the unique \(J\)-path from \(e\) to
\(f\) also contains \(x\).
\end{definition}

A hypergraph is \(\alpha\)-acyclic if and only if it admits a join tree
\cite{beeri1983desirability,fagin1983degrees,brandstadt1998dually}. The presence of a join tree makes \(\alpha\)-acyclic hypergraphs a natural
setting in which to ask whether higher-order hypergraph metrics exhibit
different Euclidean geometry than arbitrary graph metrics. A first possibility
is that the join tree itself approximates the native \(s\)-line metric.
Sections~\ref{sec:join-tree-comparison} and
\ref{sec:canonical-limitations} look at this possibility.

\subsection{Traces and basic join-tree geometry}
\label{subsec:traces}

For an \(s\)-set \(S\in\binom{V}{s}\), define its trace
\[
    \Tr_H(S):=\{e\in E:S\subseteq e\}.
\]
When \(H\) is clear from context, we write simply \(\Tr(S)\).

\begin{lemma}[Trace geometry]
\label{lem:trace-geometry}
Let \(H\) be \(\alpha\)-acyclic and let \(J\) be a join tree for \(H\).
For every \(S\in\binom{V}{s}\):
\begin{enumerate}[label=(\roman*)]
    \item if \(\Tr(S)\neq\varnothing\), then \(\Tr(S)\) induces a connected
    subtree of \(J\);
    \item \(\Tr(S)\) induces a clique in \(\Ls(H)\);
    \item every nonempty trace is contained in a unique
    connected component of \(\Ls(H)\).
\end{enumerate}
\end{lemma}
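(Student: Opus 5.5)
The plan is to prove the three parts in order, each directly from the definitions, with (iii) following from (ii).

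For (i), I will use the equivalent path form of Definition~\ref{def:join-tree}. Fix \(S\in\binom{V}{s}\) with \(\Tr(S)\neq\varnothing\), and take \(e,f\in\Tr(S)\) and any hyperedge \(g\) on the unique \(J\)-path from \(e\) to \(f\). For each \(x\in S\) we have \(x\in e\cap f\), so the join-tree property gives \(x\in g\). Hence \(S\subseteq g\), that is, \(g\in\Tr(S)\).

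Thus \(\Tr(S)\) contains the \(J\)-path between any two of its members. A vertex subset of a tree with this closure property induces a connected subtree, which proves (i). Equivalently, \(\Tr(S)=\bigcap_{x\in S}E_x\) is an intersection of subtrees of a tree, and such an intersection is connected whenever it is nonempty.

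For (ii), take distinct \(e,f\in\Tr(S)\). Then \(S\subseteq e\cap f\), so \(|e\cap f|\ge|S|=s\), and \(e,f\) are adjacent in \(\Ls(H)\) by Definition~\ref{def:s-line}. So \(\Tr(S)\) is a clique.

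For (iii), a nonempty clique induces a connected subgraph of \(\Ls(H)\). It therefore lies inside a single connected component. Uniqueness holds because the components partition \(E\).

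None of these steps is a real obstacle. The only point needing care is the passage in (i) from path-closure to induced connectivity in \(J\). This is standard: the path from \(e\) to \(f\) uses only vertices of \(\Tr(S)\), so it is a path in the induced subgraph \(J[\Tr(S)]\). I will also note that (i) does not use \(\alpha\)-acyclicity beyond the existence of \(J\), and that (ii) and (iii) hold for arbitrary hypergraphs.
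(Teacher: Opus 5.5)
Your proposal is correct and follows essentially the same approach as the paper. For (i) you use the path form of the join-tree property for each \(x\in S\) to conclude that every node on the \(J\)-path contains \(S\), and (ii) and (iii) follow directly from \(S\subseteq e\cap f\) and the connectivity of a clique, exactly as in the paper.
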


\begin{proof}
For each \(x\in S\), the set
\[
    E_x=\{e\in E:x\in e\}
\]
induces a connected subtree of \(J\). If \(e,f\in\Tr(S)\), then both contain
every \(x\in S\). Hence the unique \(J\)-path from \(e\) to \(f\) lies in
every \(E_x\), and therefore every node on that path contains all of \(S\).
This proves (i).

If \(e,f\in\Tr(S)\) are distinct, then
\[
    S\subseteq e\cap f,
\]
so \(|e\cap f|\ge s\), and \(e\) and \(f\) are adjacent in \(\Ls(H)\).
Thus \(\Tr(S)\) is a clique, proving (ii). Part (iii) follows immediately.
\end{proof}

So, a trace is connected in the join tree and forms a clique in the native graph. This relationship will be used to compare the two metrics in
\cref{sec:join-tree-comparison}.

\section{Join Trees and the Native \texorpdfstring{$s$}{s}-Line Metric}
\label{sec:join-tree-comparison}

Since \(\alpha\)-acyclic hypergraphs come with a tree representation,
our first question is whether the join-tree metric itself approximates
the native metric.  In this section we show that, on every nontrivial
\(s\)-line component, the join tree restricts to a spanning tree of the native
graph. Further, its exact multiplicative stretch is determined by the
largest join-tree diameter of an \(s\)-trace.

Let \(H=(V,E)\) be \(\alpha\)-acyclic, fix a join tree \(J\), and let \(C\)
be a nontrivial connected component of \(\Ls(H)\).

\subsection{Components induce join subtrees}

\begin{lemma}[Components induce join subtrees]
\label{lem:component-subtree}
The induced subgraph \(J[C]\) is connected.  Consequently, \(J[C]\) is a
subtree of \(J\), and for every \(e,f\in C\), the unique \(J\)-path from
\(e\) to \(f\) is contained in \(C\).
\end{lemma}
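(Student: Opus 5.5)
It suffices to show that for every edge \(ef\) of \(\Ls(H)\) with \(e,f\in C\), the whole \(J\)-path from \(e\) to \(f\) lies in \(C\). Granting this, for arbitrary \(e,f\in C\) we pick a path \(e=e_0,e_1,\dots,e_k=f\) in \(G_C\). Each \(J\)-path from \(e_{i-1}\) to \(e_i\) then lies in \(C\). Concatenating these gives a walk in \(J[C]\) from \(e\) to \(f\), so \(J[C]\) is connected. A connected induced subgraph of a tree is a subtree. Moreover, the unique \(J\)-path from \(e\) to \(f\) is contained in every connected subgraph of \(J\) containing \(e\) and \(f\), because trees have unique paths and any walk contains the path. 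In particular it lies in \(J[C]\), hence in \(C\).

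For the edge case, let \(e,f\in C\) with \(|e\cap f|\ge s\), and choose an \(s\)-set \(S\subseteq e\cap f\). Then \(e,f\in\Tr(S)\). By Lemma~\ref{lem:trace-geometry}(i), \(\Tr(S)\) induces a connected subtree of \(J\), so every hyperedge \(g\) on the \(J\)-path from \(e\) to \(f\) belongs to \(\Tr(S)\). Equivalently, \(S\subseteq g\), which we can also read off directly from the join-tree property of Definition~\ref{def:join-tree} applied to each \(x\in S\). By Lemma~\ref{lem:trace-geometry}(ii), \(\Tr(S)\) is a clique in \(\Ls(H)\), so each such \(g\) is adjacent to \(e\) or equals it. Hence \(g\in C\).

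No step is a real obstacle. The only point needing care is the final deduction about unique paths: a walk in the tree \(J\) from \(e\) to \(f\), after deleting backtracking, reduces to the unique path, and deleting backtracking only removes vertices. So the unique path uses only vertices of the walk, all of which lie in \(C\). Nontriviality of \(C\) is not actually needed. A singleton component gives a one-vertex subtree.
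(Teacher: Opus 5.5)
Your proof is correct and follows essentially the same route as the paper. Along a native path you pick an \(s\)-subset of each consecutive intersection, and use that its trace is connected in \(J\) and is a clique (so it lies in \(C\)) to keep each join-tree segment inside \(C\); you then concatenate the segments and use uniqueness of tree paths. Your remarks on reducing a walk to the unique path and on singleton components only spell out details the paper leaves implicit.
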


\begin{proof}
Fix \(e,f\in C\), and choose a path
\[
  e=e_0,e_1,\ldots,e_k=f
\]
in \(G_C\).  For each \(i\), because \(|e_i\cap e_{i+1}|\ge s\), choose
\(S_i\in\binom{e_i\cap e_{i+1}}{s}\).  Then
\(e_i,e_{i+1}\in\Tr(S_i)\).  By Lemma~\ref{lem:trace-geometry}, each
\(\Tr(S_i)\) is connected in \(J\) and lies entirely in \(C\).  Hence the
unique \(J\)-path from \(e_i\) to \(e_{i+1}\) lies in \(C\).  Concatenating
these paths gives a walk in \(J[C]\) from \(e\) to \(f\), proving that
\(J[C]\) is connected.

Since \(J[C]\) is a connected subgraph of a tree, it is itself a tree.  The
last statement follows from uniqueness of paths in \(J\).
\end{proof}

The previous lemma also implies that the subhypergraph obtained by keeping
only the hyperedges in \(C\) is again \(\alpha\)-acyclic and the tree \(J[C]\)
is a join tree for that restriction.

\begin{lemma}[Join-tree edges inside a native component are native edges]
\label{lem:join-edge-native}
If \(e,f\in C\) and \(ef\in E(J)\), then \(|e\cap f|\ge s\).  Equivalently,
\[
  E(J[C])\subseteq E(G_C).
\]
\end{lemma}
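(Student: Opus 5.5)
The plan is to combine the connectivity of \(C\) in \(\Ls(H)\) with the running-intersection property of \(J\). Fix \(e,f\in C\) with \(ef\in E(J)\). Deleting the edge \(ef\) from \(J\) splits the vertex set \(E\) into two subtrees, \(J_e\ni e\) and \(J_f\ni f\). Since \(e,f\in C\) and \(C\) is connected in \(\Ls(H)\), I will take a path from \(e\) to \(f\) in \(G_C\). Along this path some consecutive pair \(g\in J_e\), \(h\in J_f\) must switch sides. Such a pair satisfies \(|g\cap h|\ge s\), so I will choose an \(s\)-set \(S\subseteq g\cap h\).

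By Lemma~\ref{lem:trace-geometry}(i), \(\Tr(S)\) induces a connected subtree of \(J\) containing both \(g\) and \(h\). The unique \(J\)-path from \(g\) to \(h\) therefore lies in \(\Tr(S)\). Because \(g\) and \(h\) lie on opposite sides of the edge \(ef\), this path must pass through the edge \(ef\), so \(e,f\in\Tr(S)\). It follows that \(S\subseteq e\cap f\), which gives \(|e\cap f|\ge s\). In particular \(ef\in E(G_C)\), and this proves \(E(J[C])\subseteq E(G_C)\).

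The only step needing care is the claim that the \(J\)-path between vertices in different components of \(J-ef\) uses the edge \(ef\). This is immediate from uniqueness of paths in a tree. An alternative route is through Lemma~\ref{lem:component-subtree}: it gives that the \(J\)-path between the endpoints of each native edge stays in \(C\), but it does not by itself give \(s\)-intersection. So I would use the cut argument above, which does not need that lemma at all.
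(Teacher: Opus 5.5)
Your proposal is correct and follows essentially the same argument as the paper. Both delete the edge \(ef\) from \(J\), take a native path from \(e\) to \(f\) in \(G_C\), pick a consecutive pair that crosses the cut, and use the \(J\)-connectivity of the trace of an \(s\)-subset of that pair's intersection to force \(S\subseteq e\cap f\).
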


\begin{proof}
Remove the join-tree edge \(ef\).  This separates \(J\) into two components,
say \(A\) containing \(e\) and \(B\) containing \(f\).  Since \(e\) and
\(f\) belong to the same component \(C\) of \(\Ls(H)\), there is a path
\[
  e=e_0,e_1,\ldots,e_k=f
\]
in \(G_C\).  This path must cross the cut \((A,B)\), so for some \(i\) we
have \(e_i\in A\) and \(e_{i+1}\in B\).  Choose
\(S\in\binom{e_i\cap e_{i+1}}{s}\).  The trace \(\Tr(S)\) is connected in
\(J\) by Lemma~\ref{lem:trace-geometry} and contains nodes on both sides of the cut.
Hence it contains the unique join-tree edge crossing the cut, namely \(ef\).
Thus \(S\subseteq e\cap f\), and therefore \(|e\cap f|\ge s\).
\end{proof}

\begin{proposition}[The induced join tree is a spanning tree]
\label{prop:join-tree-spanning}
For every nontrivial component \(C\) of \(\Ls(H)\), the tree \(J[C]\) is a
spanning tree of \(G_C\).
\end{proposition}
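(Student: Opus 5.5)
The plan is to combine the two preceding lemmas directly. A spanning tree of \(G_C\) must satisfy three requirements: it is a subgraph of \(G_C\), it uses every vertex of \(C\), and it is a tree. Each requirement will come from one of the earlier results.

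First, the vertex set of \(J[C]\) is \(C\) by definition of the induced subgraph, so \(J[C]\) has the same vertex set as \(G_C\). Second, Lemma~\ref{lem:component-subtree} says that \(J[C]\) is connected. Since it is a connected subgraph of the tree \(J\), it is acyclic and hence a tree. Third, Lemma~\ref{lem:join-edge-native} gives \(E(J[C])\subseteq E(G_C)\), so \(J[C]\) is a subgraph of \(G_C\). Putting these together, \(J[C]\) is a tree on vertex set \(C\) whose edges all lie in \(G_C\), which is exactly a spanning tree of \(G_C\).

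The nontriviality hypothesis on \(C\) is not really needed. If \(|C|=1\), then \(J[C]\) is a single vertex, which is trivially a spanning tree. The hypothesis only matches the standing assumptions of this section.

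I expect no genuine obstacle here. The substantive work, namely connectivity of \(J[C]\) and the fact that join-tree edges inside \(C\) have overlap at least \(s\), was already done in the two lemmas. The only care needed is to state clearly that an induced connected subgraph of a tree is a tree, and that a subgraph inclusion on the common vertex set \(C\) is precisely the definition of a spanning subgraph.
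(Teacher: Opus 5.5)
Your proof is correct and is the paper's argument: Lemma~\ref{lem:component-subtree} makes \(J[C]\) a tree on vertex set \(C\), and Lemma~\ref{lem:join-edge-native} gives \(E(J[C])\subseteq E(G_C)\). Your side remark is also correct: nontriviality of \(C\) is not needed, since for \(|C|=1\) the single vertex is trivially a spanning tree.
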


\begin{proof}
By Lemma~\ref{lem:component-subtree}, \(J[C]\) is a tree on vertex set \(C\).
By Lemma~\ref{lem:join-edge-native}, every edge of \(J[C]\) is an edge of \(G_C\).
\end{proof}

\subsection{Exact stretch through traces}

For a connected graph \(G\) and a spanning tree \(T\subseteq G\), define the
multiplicative stretch
\[
  \tau(T,G):=\max_{xy\in E(G)}d_T(x,y).
\]
The standard tree-spanner argument gives
\[
  d_G(x,y)\le d_T(x,y)\le \tau(T,G)d_G(x,y)
\]
for all \(x,y\in V(G)\): the first inequality follows because every tree path
is a graph path, and the second by replacing each edge of a shortest
\(G\)-path with its path in \(T\).

For a fixed join tree \(J\), define
\[
  D_s(C,J)
  :=
  \max_{\substack{S\in\binom{V}{s}\\ |\Tr(S)|\ge 2\\ \Tr(S)\subseteq C}}
  \diam_J(\Tr(S)),
\]
where
\[
  \diam_J(\Tr(S)):=\max_{e,f\in\Tr(S)}d_J(e,f).
\]
Because \(C\) is nontrivial, it contains a native edge and thus some trace
with at least two nodes, so \(D_s(C,J)\ge 1\).

\begin{theorem}[Exact trace-diameter/stretch identity]
\label{thm:stretch-identity}
For every nontrivial component \(C\) of \(\Ls(H)\),
\[
  \tau(J[C],G_C)=D_s(C,J).
\]
So, for all \(e,f\in C\),
\[
  d_s(e,f)
  \le d_J(e,f)
  \le D_s(C,J)d_s(e,f).
\]
\end{theorem}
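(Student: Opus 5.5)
We prove the identity \(\tau(J[C],G_C)=D_s(C,J)\) by establishing the two inequalities separately. Throughout we use that, by Lemma~\ref{lem:component-subtree}, for \(e,f\in C\) the \(J\)-path between them lies in \(C\), so \(d_{J[C]}(e,f)=d_J(e,f)\). Proposition~\ref{prop:join-tree-spanning} makes \(J[C]\) a spanning tree of \(G_C\), so \(\tau(J[C],G_C)\) is well defined.

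For the upper bound \(\tau\le D_s(C,J)\), take any native edge \(ef\in E(G_C)\). Since \(|e\cap f|\ge s\), we may choose \(S\in\binom{e\cap f}{s}\). Then \(e,f\in\Tr(S)\), so \(|\Tr(S)|\ge2\). By Lemma~\ref{lem:trace-geometry}(iii), \(\Tr(S)\) lies in a single component, which must be \(C\) because it contains \(e\). Hence \(S\) is admissible in the definition of \(D_s(C,J)\), and
\[
d_{J[C]}(e,f)=d_J(e,f)\le \diam_J(\Tr(S))\le D_s(C,J).
\]
Taking the maximum over native edges gives \(\tau(J[C],G_C)\le D_s(C,J)\).

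For the lower bound, let \(S\) attain the maximum in \(D_s(C,J)\), and let \(e,f\in\Tr(S)\) satisfy \(d_J(e,f)=\diam_J(\Tr(S))\). If \(e=f\) the diameter would be \(0\), but \(D_s(C,J)\ge1\), so \(e\neq f\). By Lemma~\ref{lem:trace-geometry}(ii), \(e\) and \(f\) are adjacent in \(\Ls(H)\), and since both lie in \(C\), \(ef\in E(G_C)\). Therefore
\[
\tau(J[C],G_C)\ge d_{J[C]}(e,f)=d_J(e,f)=D_s(C,J).
\]

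The distance inequalities then follow from the standard tree-spanner bound stated just before the theorem, applied with \(G=G_C\) and \(T=J[C]\), using \(d_{J[C]}=d_J\) on \(C\) and \(d_{G_C}=d_s\). No step is a serious obstacle. The one point needing care is identifying \(d_{J[C]}\) with \(d_J\), so that the trace diameter, which is measured in \(J\), matches the stretch, which is measured in \(J[C]\); Lemma~\ref{lem:component-subtree} handles this.
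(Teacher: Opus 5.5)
Your proposal is correct and follows the paper's proof essentially step for step. You use the same choice of an \(s\)-subset of \(e\cap f\) for \(\tau\le D_s(C,J)\), the same clique property of traces for the reverse inequality, and the same spanning-tree inequality via \(d_{J[C]}=d_J\); your explicit use of Lemma~\ref{lem:trace-geometry}(iii) to check that \(\Tr(S)\subseteq C\), so that \(S\) is admissible in \(D_s(C,J)\), fills in a detail the paper leaves implicit.
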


\begin{proof}
Let \(ef\in E(G_C)\).  Choose \(S\in\binom{e\cap f}{s}\).  Then
\(e,f\in\Tr(S)\), and Lemma~\ref{lem:component-subtree} gives
\(d_{J[C]}(e,f)=d_J(e,f)\).  Therefore
\[
  d_{J[C]}(e,f)
  \le \diam_J(\Tr(S))
  \le D_s(C,J).
\]
Taking the maximum over \(ef\in E(G_C)\) yields
\(\tau(J[C],G_C)\le D_s(C,J)\).

Conversely, fix \(S\) with \(|\Tr(S)|\ge 2\) and \(\Tr(S)\subseteq C\).  Any two distinct
\(e,f\in\Tr(S)\) are adjacent in \(G_C\) by
Lemma~\ref{lem:trace-geometry}.  Hence
\[
  d_J(e,f)=d_{J[C]}(e,f)\le \tau(J[C],G_C).
\]
Taking the maximum over \(e,f\in\Tr(S)\), and then over \(S\), gives
\(D_s(C,J)\le\tau(J[C],G_C)\).  This proves the identity. Then, the metric
comparison is the standard spanning-tree inequality applied to \(J[C]\subseteq G_C\).
\end{proof}

\begin{remark}
\label{rem:trace-meaning}
The parameter \(D_s(C,J)\) has a direct geometric interpretation: each nontrivial trace
has native diameter one, but may have large diameter in the join tree.  Thus
\(D_s(C,J)\) measures how severely the join tree \(J\) can stretch a native shortcut.
\end{remark}

\section{Limitations of the Canonical Join Tree}
\label{sec:canonical-limitations}

The comparison in \cref{thm:stretch-identity} is exact for a fixed join tree,
but it may not be useful for Euclidean embedding. A long trace may be highly
stretched in every valid join tree even though that same trace is a clique in
the native graph. The following hypergraph construction displays this behavior.

\begin{proposition}[A forced join path with an equilateral native metric]
\label{prop:forced-path}
For every \(s\ge1\) and \(n\ge3\), there exists an \(\alpha\)-acyclic
hypergraph \(H\) with hyperedges \(e_1,\ldots,e_n\) such that:
\begin{enumerate}[label=(\roman*)]
  \item \(\Ls(H)=K_n\);
  \item every valid join tree is the path
  \[
    e_1-e_2-\cdots-e_n;
  \]
  \item for that join tree \(J\),
  \[
    D_s(E,J)=n-1;
  \]
  \item
  \[
    c_2(E,d_s)=1.
  \]
\end{enumerate}
\end{proposition}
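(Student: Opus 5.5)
The plan is to build an explicit hypergraph in which one common \(s\)-set forces the native graph to be complete, while a chain of "linking" vertices forces the join tree to be a path. Let \(S=\{a_1,\ldots,a_s\}\) and let \(x_1,\ldots,x_{n-1}\) be fresh vertices. Define
\[
  e_1:=S\cup\{x_1\},\qquad e_i:=S\cup\{x_{i-1},x_i\}\ (2\le i\le n-1),\qquad e_n:=S\cup\{x_{n-1}\}.
\]
These sets are pairwise distinct, because \(n\ge 3\) makes \(e_1\) and \(e_n\) contain different single linking vertices. The hypergraph has rank \(s+2\).

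For (i): every pair of hyperedges contains \(S\), so \(|e_i\cap e_j|\ge s\) for all \(i\neq j\), and hence \(\Ls(H)=K_n\).

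For (ii) and \(\alpha\)-acyclicity: first check that the path \(P=e_1-e_2-\cdots-e_n\) is a join tree. The set \(E_{a_j}\) is all of \(E\), which is connected in \(P\). The set \(E_{x_i}=\{e_i,e_{i+1}\}\) is an edge of \(P\). So \(H\) is \(\alpha\)-acyclic. For uniqueness, let \(J\) be any join tree. Since \(E_{x_i}=\{e_i,e_{i+1}\}\) must induce a connected subgraph of \(J\), the edge \(e_ie_{i+1}\) lies in \(J\) for each \(i=1,\ldots,n-1\). These are \(n-1\) distinct edges, and a tree on \(n\) vertices has exactly \(n-1\) edges, so \(J=P\).

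For (iii), with \(C=E\) (the whole graph is one component): \(\Tr(S)=E\) has \(J\)-diameter \(d_P(e_1,e_n)=n-1\). Any other \(s\)-set \(S'\) with \(|\Tr(S')|\ge2\) must contain some \(x_i\), since \(S\) is the only \(s\)-subset of \(S\). Then \(\Tr(S')\subseteq\{e_i,e_{i+1}\}\), whose diameter is \(1\le n-1\). Hence \(D_s(E,J)=n-1\). By \cref{thm:stretch-identity}, this is also the stretch of \(J\).

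For (iv): \(d_s\) is the equilateral metric on \(n\) points. Mapping \(e_i\mapsto \tfrac{1}{\sqrt2}\mathbf{e}_i\in\mathbb{R}^n\), where \(\mathbf{e}_i\) are the standard basis vectors, makes all pairwise distances equal to \(1\). So this is an isometry and \(c_2=1\).

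The only step needing care is the uniqueness in (ii). It rests on counting edges: the \(n-1\) forced edges already form a spanning tree, so no other join tree is possible. The rest is direct verification.
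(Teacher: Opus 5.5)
Your proof is correct and follows the paper's own argument. It uses the same construction: a shared $s$-set forces $\Ls(H)=K_n$, and each linking vertex forces the edge $e_ie_{i+1}$ into every join tree, so these $n-1$ forced edges pin down the path. As in the paper, $\Tr(S)=E$ then gives $D_s(E,J)=n-1$, and the equilateral metric embeds isometrically as a simplex. Your check that every other nontrivial trace lies inside some $\{e_i,e_{i+1}\}$ simply spells out the bound $D_s(E,J)\le n-1$, which the paper leaves implicit since it also follows from $\diam(J)=n-1$.
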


\begin{proof}
Let
\[
  X=\{x_1,\ldots,x_s\}
\]
and introduce distinct vertices \(p_1,\ldots,p_{n-1}\) outside \(X\).  Define
\[
  e_1=X\cup\{p_1\},
\]
\[
  e_i=X\cup\{p_{i-1},p_i\},\qquad 2\le i\le n-1,
\]
and
\[
  e_n=X\cup\{p_{n-1}\}.
\]
The displayed path \(e_1-e_2-\cdots-e_n\) is a join tree: every \(x_j\in X\)
occurs in all hyperedges, while \(p_i\) occurs exactly in the adjacent pair
\(e_i,e_{i+1}\).  So \(H\) is \(\alpha\)-acyclic.

Every pair of hyperedges contains \(X\), so every pair intersects in at least
\(s\) vertices.  Thus \(\Ls(H)=K_n\), and the native metric is equilateral
\[
  d_s(e_i,e_j)=1\qquad(i\neq j).
\]

Now consider any join tree for \(H\).  Since \(p_i\) appears in exactly
\(e_i\) and \(e_{i+1}\), the running-intersection property forces these two
hyperedges to be adjacent in every join tree.  The \(n-1\) forced edges
\(e_ie_{i+1}\) already form a spanning tree, so the join tree is uniquely the
displayed path.

Finally, \(\Tr(X)=E\), whose diameter in the forced path is \(n-1\). Thus
\(D_s(E,J)=n-1\).  On the other hand, every finite equilateral metric embeds
isometrically into Euclidean space as the vertices of a regular simplex, so
\(c_2(E,d_s)=1\).
\end{proof}

Thus, the hypergraph is acyclic and has a canonical join-tree representation, however the representation can be an arbitrarily poor metric. Specifically, even minimizing \(D_s(C,J)\) over all valid join trees cannot repair this construction, because its join tree is forced.

\section{Higher-Order Line Graphs and Dually Chordal Graphs}
\label{sec:dually-chordal}

Dually chordal graphs admit several equivalent descriptions, including maximum
neighborhood orderings and hypergraph representations included in \cite{brandstadt1998dually}. The characterization we use is that a graph is dually chordal if and only if it is the ordinary line graph of an \(\alpha\)-acyclic hypergraph \cite[Theorem~9(ii)]{brandstadt1998dually}.  We show that the same graph class arises at every threshold \(s\ge1\).

\begin{theorem}[Characterization of \(s\)-line graphs]
\label{thm:s-line-characterization}
Fix \(s\ge1\).  A finite graph \(G\) is dually chordal if and only if there
exists an \(\alpha\)-acyclic hypergraph \(H\) such that
\[
  G\cong \Ls(H).
\]
Consequently,
\[
  \{\Ls(H):H\text{ is }\alpha\text{-acyclic}\}
  =
  \{G:G\text{ is dually chordal}\}.
\]
\end{theorem}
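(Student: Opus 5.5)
The plan is to reduce both directions to the case \(s=1\), which is the Brandst\"adt--Dragan--Chepoi--Voloshin characterization \cite[Theorem~9(ii)]{brandstadt1998dually}: a graph is dually chordal if and only if it is \(L_1(H')\) for some \(\alpha\)-acyclic \(H'\). I will use two hypergraph transformations. One raises the threshold from \(1\) to \(s\), and the other lowers it from \(s\) to \(1\). Each must preserve \(\alpha\)-acyclicity, which I will certify by reusing a join tree on the same index set of hyperedges.

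\emph{Dually chordal \(\Rightarrow\) \(s\)-line graph.} Let \(G\cong L_1(H')\) with \(H'=(V',E')\) \(\alpha\)-acyclic and join tree \(J'\).
\begin{enumerate}[label=(\alph*)]
\item Blow up every vertex \(x\in V'\) into \(s\) fresh copies \(x^{(1)},\ldots,x^{(s)}\). Each hyperedge \(e\) becomes \(\hat e:=\{x^{(i)}:x\in e,\ 1\le i\le s\}\).
\item Then \(|\hat e\cap\hat f|=s\,|e\cap f|\), so \(|\hat e\cap\hat f|\ge s\) if and only if \(e\cap f\neq\varnothing\).
\item The map \(e\mapsto\hat e\) is injective, so the hyperedges stay distinct, and it gives \(L_s(\hat H)\cong L_1(H')\cong G\).
\item The same tree \(J'\), with nodes relabeled \(e\mapsto\hat e\), is a join tree for \(\hat H\). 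Indeed, the hyperedges containing \(x^{(i)}\) are exactly the images of those containing \(x\), which form a connected subtree of \(J'\).
\end{enumerate}

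\emph{\(s\)-line graph \(\Rightarrow\) dually chordal.} Let \(H=(V,E)\) be \(\alpha\)-acyclic with join tree \(J\).
\begin{enumerate}[label=(\alph*)]
\item Build \(H^*\) whose vertex set is \(\binom{V}{s}\) together with a private vertex \(p_e\) for each \(e\in E\). Its hyperedges are
\[
  e^*:=\binom{e}{s}\cup\{p_e\}.
\]
\item The private vertices make the \(e^*\) pairwise distinct. This matters because hyperedges of size less than \(s\) would otherwise all collapse to \(\varnothing\); here they simply become isolated vertices of \(L_1(H^*)\), as they are in \(L_s(H)\).
\item For distinct \(e,f\), we have \(e^*\cap f^*=\binom{e\cap f}{s}\). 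This set is nonempty if and only if \(|e\cap f|\ge s\), so \(L_1(H^*)\cong L_s(H)\) via \(e\mapsto e^*\).
\item For \(\alpha\)-acyclicity, take \(J\) relabeled by \(e\mapsto e^*\). The hyperedges containing an \(s\)-set \(S\) are exactly \(\Tr_H(S)\), which is empty or connected in \(J\) by Lemma~\ref{lem:trace-geometry}(i). Each private vertex \(p_e\) lies in a single hyperedge, so its set is trivially connected.
\item Hence \(J\) is a join tree for \(H^*\), and the BDCV theorem shows that \(L_s(H)\) is dually chordal.
\end{enumerate}

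The displayed set equality then follows immediately, since both inclusions are the two directions above taken up to isomorphism. I expect the main obstacle to be bookkeeping rather than any deep step. It consists of matching the conventions of \cite{brandstadt1998dually} (whether hyperedges may be empty or repeated, and whether disconnected graphs are allowed) with those of this paper, and making sure the reduction to \(H^*\) respects them. The private-vertex trick handles distinctness and emptiness. If the cited theorem is stated only for connected graphs, I would apply it componentwise. This is legitimate because, by Lemma~\ref{lem:component-subtree}, each component of \(L_s(H)\) comes from an \(\alpha\)-acyclic subhypergraph, and disjoint unions of dually chordal graphs are dually chordal.
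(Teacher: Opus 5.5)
Your proposal is correct and is essentially the paper's proof. For \(\alpha\)-acyclic \(\Rightarrow\) dually chordal, the paper uses the same \(s\)-subset hypergraph \(\widehat H_s\) with private vertices \(p_e\), reuses \(J\) as its join tree via Lemma~\ref{lem:trace-geometry}, and invokes BDCV on \(L(\widehat H_s)\cong\Ls(H)\); for the converse it uses the same blow-up of each vertex of a BDCV hypergraph \(K\) into an \(s\)-block to get \(\Ls(K^{(s)})\cong G\), the only cosmetic difference being that the paper keeps only the \(s\)-sets contained in some hyperedge, which sidesteps the ``empty or connected'' convention you flagged.
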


\begin{proof}
We prove both directions.

\smallskip
\noindent\emph{Forward direction.}
Let \(H=(V,E)\) be \(\alpha\)-acyclic with join tree \(J\).  Construct a new
hypergraph
\[
  \widehat H_s=(\widehat V,\widehat E)
\]
whose ground set contains one vertex for every \(s\)-subset of \(V\) lying in
at least one hyperedge, together with a private vertex \(p_e\) for each
\(e\in E\):
\[
  \widehat V
  =
  \left\{S\in\binom{V}{s}: S\subseteq e\text{ for some }e\in E\right\}
  \cup \{p_e:e\in E\}.
\]
For each \(e\in E\), define
\[
  \widehat e
  :=
  \binom{e}{s}\cup\{p_e\}.
\]
The private vertices keep the hyperedges \(\widehat e\) distinct, including
when \(|e|<s\).

For distinct \(e,f\in E\), the private vertices do not contribute to
\(\widehat e\cap\widehat f\), and therefore
\[
  \widehat e\cap\widehat f\neq\varnothing
  \iff
  \exists S\in\binom{V}{s}\text{ with }S\subseteq e\cap f
  \iff
  |e\cap f|\ge s.
\]
So
\[
  L(\widehat H_s)\cong \Ls(H).
\]

It remains to show that \(\widehat H_s\) is \(\alpha\)-acyclic. Identify the
node \(\widehat e\) with \(e\) and use the original tree \(J\). For a vertex
\(S\in\widehat V\) corresponding to an \(s\)-set, the hyperedges of
\(\widehat H_s\) containing \(S\) are
\[
  \{\widehat e:S\subseteq e\},
\]
which correspond to the trace \(\Tr_H(S)\).  This set is connected in \(J\) by
Lemma~\ref{lem:trace-geometry}. Each private vertex \(p_e\) occurs in one hyperedge
only. Thus \(J\) is a join tree for \(\widehat H_s\). By the ordinary
line-graph characterization of \cite[Theorem~9(ii)]{brandstadt1998dually}, \(L(\widehat H_s)\), and thus \(\Ls(H)\), is dually chordal.

\smallskip
\noindent\emph{Reverse direction.}
Suppose \(G\) is dually chordal.  By
\cite[Theorem~9(ii)]{brandstadt1998dually}, there exists an
\(\alpha\)-acyclic hypergraph \(K=(X,\cE)\) such that
\[
  G\cong L(K).
\]
For each \(x\in X\), create a block
\[
  B_x:=\{x_1,\ldots,x_s\}
\]
of \(s\) fresh vertices, with the blocks pairwise disjoint.  Replace each
\(A\in\cE\) by
\[
  A^{(s)}:=\bigcup_{x\in A}B_x,
\]
and let \(K^{(s)}\) denote the resulting hypergraph. Since the blocks
are nonempty and pairwise disjoint, the map \(A\mapsto A^{(s)}\) is
injective, so the resulting hyperedges are distinct.

For \(A,B\in\cE\),
\[
  A^{(s)}\cap B^{(s)}
  =
  \bigcup_{x\in A\cap B}B_x.
\]
Because the blocks are pairwise disjoint and each has size \(s\),
\[
  |A^{(s)}\cap B^{(s)}|\ge s
  \iff
  A\cap B\neq\varnothing.
\]
Therefore
\[
  \Ls(K^{(s)})\cong L(K)\cong G.
\]
Finally, any join tree for \(K\) is also a join tree for \(K^{(s)}\), because
all copies of a vertex \(x\) occur in exactly the hyperedges that originally
contained \(x\).  Hence \(K^{(s)}\) is \(\alpha\)-acyclic.
\end{proof}

\section{Tree Approximation and the Euclidean Upper Bound}
\label{sec:euclidean-upper}

We know that every connected dually chordal graph has a spanning tree that is also an additive \(3\)-spanner and a multiplicative \(4\)-spanner \cite{brandstadt1999distance}. Specifically, if \(G\) is connected and dually chordal, there exists a spanning tree \(T\subseteq G\) satisfying
\[
  d_G(x,y)\le d_T(x,y)\le 4d_G(x,y)
\]
for all \(x,y\in V(G)\).

Combining this theorem with \cref{thm:s-line-characterization} removes the
trace-diameter factor from \cref{thm:stretch-identity}.

\begin{theorem}[Universal tree approximation of the native metric]
\label{thm:native-tree-spanner}
Let \(H\) be \(\alpha\)-acyclic, fix \(s\ge1\), and let \(C\) be a connected
component of \(\Ls(H)\).  There exists a spanning tree \(T^\star\) of \(G_C\)
such that, for all \(e,f\in C\),
\[
  d_s(e,f)
  \le d_{T^\star}(e,f)
  \le 4d_s(e,f).
\]
\end{theorem}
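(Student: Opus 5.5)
The plan is to reduce the statement to the spanning-tree theorem of \cite{brandstadt1999distance} quoted at the start of this section. That theorem requires the native graph \(G_C\) to be connected and dually chordal. Connectivity holds because \(C\) is a connected component of \(\Ls(H)\). The only real work is to show that \(G_C=\Ls(H)[C]\) is dually chordal, and we intend to get this from \cref{thm:s-line-characterization} applied to a suitable \(\alpha\)-acyclic hypergraph.

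First, we handle the trivial case. If \(|C|=1\), then \(T^\star\) is the single vertex, both distances vanish, and the inequality is immediate.

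Next, assume \(C\) is nontrivial and let \(H_C=(V,C)\) be the subhypergraph keeping only the hyperedges in \(C\). We claim \(H_C\) is \(\alpha\)-acyclic. By \cref{lem:component-subtree}, \(J[C]\) is a subtree of a join tree \(J\) of \(H\), and the remark after that lemma says \(J[C]\) is a join tree for \(H_C\). To check this directly, take any \(x\) and any \(e,f\in C\) containing \(x\). The \(J\)-path between them lies in \(E_x\), and by \cref{lem:component-subtree} it also lies in \(C\). So \(\{e\in C:x\in e\}\) is connected in \(J[C]\).

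We then compute the \(s\)-line graph of \(H_C\). Adjacency in an \(s\)-line graph depends only on the pair of hyperedges involved, so \(\Ls(H_C)=\Ls(H)[C]=G_C\). Since \(H_C\) is \(\alpha\)-acyclic, the forward direction of \cref{thm:s-line-characterization} shows that \(G_C\) is dually chordal. It is also connected, being a component.

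Finally, we invoke \cite{brandstadt1999distance}. It gives a spanning tree \(T^\star\subseteq G_C\) with \(d_{G_C}\le d_{T^\star}\le 4d_{G_C}\), and \(d_{G_C}=d_s\) on \(C\) by definition, which is the claimed inequality.

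The only step needing care is that dual chordality is taken for the induced component rather than for all of \(\Ls(H)\). Restricting the hypergraph to \(C\) handles this, so we do not need to know whether dual chordality is inherited by components. Alternatively, one can note that dual chordality is preserved under taking connected components, which follows from the maximum neighborhood ordering characterization in \cite{brandstadt1998dually}. The restriction argument avoids relying on that fact.
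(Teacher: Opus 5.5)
Your proposal is correct and follows the paper's proof essentially step for step. You restrict \(H\) to the hyperedges of \(C\), use Lemma~\ref{lem:component-subtree} to see that the restriction is \(\alpha\)-acyclic with join tree \(J[C]\), apply the forward direction of Theorem~\ref{thm:s-line-characterization} to conclude that \(G_C=\Ls(H_C)\) is dually chordal, and then invoke the multiplicative \(4\)-spanner theorem; your separate handling of the one-vertex component and your explicit running-intersection check for \(J[C]\) are details the paper leaves implicit.
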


\begin{proof}
By Lemma~\ref{lem:component-subtree}, the restriction of \(H\) to the hyperedges
in \(C\) is \(\alpha\)-acyclic.  Applying \cref{thm:s-line-characterization} to this restricted hypergraph shows that \(G_C\) is dually chordal. The multiplicative \(4\)-spanner theorem \cite{brandstadt1999distance} then gives a spanning tree \(T^\star\subseteq G_C\) satisfying
\[
  d_{T^\star}(e,f)\le4d_s(e,f).
\]
Since every edge of \(T^\star\) is an edge of \(G_C\), every tree path is a
native graph path, so \(d_s(e,f)\le d_{T^\star}(e,f)\).
\end{proof}

We now pass from tree approximation to Euclidean embedding.  Matou\v{s}ek
proved that every \(n\)-point tree metric embeds into uniformly convex Banach
spaces with the asymptotically optimal dependence required here; in particular,
for Hilbert space,
\[
  c_2(T)=O\!\left(\sqrt{\log\log n}\right)
\]
for every \(n\)-vertex tree \(T\) \cite{matousek1999embedding}.  This is the
tree-embedding input used below.

\begin{proposition}[Euclidean distortion of dually chordal graphs]
\label{prop:dually-chordal-c2}
Every connected dually chordal graph \(G\) on \(n\ge3\) vertices satisfies
\[
  c_2(G)
  =O\!\left(\sqrt{\log\log n}\right).
\]
\end{proposition}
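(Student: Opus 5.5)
The plan is to compose the spanning-tree approximation of \cite{brandstadt1999distance} with Matou\v{s}ek's tree-embedding bound. Let \(G\) be connected and dually chordal on \(n\ge3\) vertices. By the multiplicative \(4\)-spanner theorem recalled at the start of this section, there is a spanning tree \(T\subseteq G\) with
\[
  d_G(x,y)\le d_T(x,y)\le 4\,d_G(x,y)\qquad\text{for all }x,y\in V(G).
\]
So the identity map \((V(G),d_G)\to(V(G),d_T)\) is a bijection with \(\Lip=4\) and \(\Lip^{-1}\le1\), and hence has distortion at most \(4\). Since \(T\) is spanning, it is an \(n\)-vertex tree. Matou\v{s}ek's theorem \cite{matousek1999embedding} then gives an injective \(F:V(T)\to\elltwo\) which, after rescaling as in Section~\ref{subsec:distortion}, satisfies
\[
  d_T(x,y)\le\|F(x)-F(y)\|_2\le K\sqrt{\log\log n}\,d_T(x,y)
\]
for an absolute constant \(K\).

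Next, regard \(F\) as a map on \(V(G)\) and chain the inequalities. The lower bound is
\[
  \|F(x)-F(y)\|_2\ge d_T(x,y)\ge d_G(x,y),
\]
and the upper bound is
\[
  \|F(x)-F(y)\|_2\le K\sqrt{\log\log n}\,d_T(x,y)\le 4K\sqrt{\log\log n}\,d_G(x,y).
\]
Therefore \(\dist(F)\le 4K\sqrt{\log\log n}\), which gives \(c_2(G)=O(\sqrt{\log\log n})\). The hypothesis \(n\ge3\) only ensures that \(\log\log n>0\), so that the \(O(\cdot)\) statement is meaningful. If the \(n\)-dependence in Matou\v{s}ek's bound is stated as \(O(\sqrt{\log\log n})\) only asymptotically, small \(n\) can be absorbed into the constant, since every \(n\)-point metric has finite distortion.

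There is no real obstacle here: the argument is just submultiplicativity of distortion under composition. The only care needed is to apply Matou\v{s}ek's bound to the tree metric \(d_T\) on exactly the same \(n\) points, which holds because \(T\) is spanning, and to keep the constant \(4\) separate from the tree constant. I would also note that this proposition, combined with Theorem~\ref{thm:native-tree-spanner} (or with Theorem~\ref{thm:s-line-characterization} applied componentwise), yields the upper bound \(\Ds(n)=O(\sqrt{\log\log n})\) in Theorem~\ref{thm:intro-main}. Components with \(n\le 2\) embed isometrically and can be handled directly.
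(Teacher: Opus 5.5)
Your proposal is correct and follows the paper's proof: you take the multiplicative $4$-spanning tree from \cite{brandstadt1999distance}, embed it with Matou\v{s}ek's $O(\sqrt{\log\log n})$ bound, and chain $d_G\le d_T\le 4d_G$ to get distortion at most $4D$. The only difference is that you normalize the tree embedding as $d_T\le\|F(x)-F(y)\|_2\le D\,d_T$, whereas the paper uses $\frac1D d_T\le\|F(x)-F(y)\|_2\le d_T$, and this does not change the argument.
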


\begin{proof}
Let \(T\subseteq G\) be the multiplicative \(4\)-spanning tree from
\cite{brandstadt1999distance}.  By \cite{matousek1999embedding}, there is an
embedding \(F:V(T)\to\elltwo\) of distortion
\(D=O(\sqrt{\log\log n})\).  Rescale \(F\) so that
\[
  \frac{1}{D}d_T(x,y)
  \le \|F(x)-F(y)\|_2
  \le d_T(x,y).
\]
Using \(d_G\le d_T\le4d_G\), we obtain
\[
  \frac{1}{D}d_G(x,y)
  \le \|F(x)-F(y)\|_2
  \le 4d_G(x,y).
\]
Thus \(F\), viewed as an embedding of \((V(G),d_G)\), has distortion at most
\(4D=O(\sqrt{\log\log n})\).
\end{proof}

\begin{theorem}[Euclidean upper bound for acyclic \(s\)-line metrics]
\label{thm:hypergraph-upper}
Let \(H\) be \(\alpha\)-acyclic, fix \(s\ge1\), and let \(C\) be a connected
component of \(\Ls(H)\) with \(|C|=n\ge3\).  Then
\[
  c_2(C,d_s)
  =O\!\left(\sqrt{\log\log n}\right),
\]
where the implicit constant is universal and, in particular, independent of
\(s\), \(H\), and the choice of join tree.
\end{theorem}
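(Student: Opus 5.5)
The plan is to reduce \cref{thm:hypergraph-upper} to \cref{prop:dually-chordal-c2} by showing that the component \(G_C\) is a connected dually chordal graph on \(n\) vertices, with the metric \((C,d_s)\) being exactly its shortest-path metric.

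\emph{Reduction to a hypergraph with a connected line graph.} Let \(H_C\) be the subhypergraph of \(H\) that keeps only the hyperedges in \(C\). Because \(|C|=n\ge3\), the component \(C\) is nontrivial. Then \cref{lem:component-subtree} gives that \(J[C]\) is a join tree for \(H_C\), so \(H_C\) is \(\alpha\)-acyclic. Two hyperedges of \(C\) are adjacent in \(\Ls(H_C)\) exactly when they are adjacent in \(\Ls(H)\), since adjacency depends only on the pair. Hence \(\Ls(H_C)=G_C\), which is connected, and its shortest-path metric is \(d_s\) on \(C\).

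\emph{Dual chordality and conclusion.} Applying the forward direction of \cref{thm:s-line-characterization} to \(H_C\) shows that \(G_C\) is dually chordal. Alternatively, \cref{thm:native-tree-spanner} directly supplies a spanning tree \(T^\star\) with \(d_s\le d_{T^\star}\le 4d_s\). Then \cref{prop:dually-chordal-c2} applied to \(G_C\), or equivalently Matou\v{s}ek's bound applied to \(T^\star\) followed by the same rescaling argument, gives \(c_2(C,d_s)\le 4D\) with \(D=O(\sqrt{\log\log n})\).

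\emph{Uniformity of the constant.} It remains to check that the implied constant is universal. The factor \(4\) comes from \cite{brandstadt1999distance}, and the constant in Matou\v{s}ek's theorem depends only on \(n\). Neither involves \(s\), \(H\), or any join tree. The hypergraph enters only through the fact that \(G_C\) is dually chordal, so the final constant is independent of \(s\), \(H\), and the choice of join tree.

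The only step that needs care is the first one: verifying that restricting to \(C\) preserves \(\alpha\)-acyclicity and that \(d_s\) coincides with the graph metric of \(\Ls(H_C)\). Both facts follow from \cref{lem:component-subtree} and the definitions. Everything after that is a direct citation, so I do not expect a genuine obstacle.
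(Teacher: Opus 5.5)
Your proposal is correct and follows the paper's own proof. You restrict to the hyperedges of \(C\), use \cref{lem:component-subtree} to see that this subhypergraph is \(\alpha\)-acyclic with \(s\)-line graph exactly \(G_C\), apply the forward direction of \cref{thm:s-line-characterization} to get that \(G_C\) is dually chordal, and conclude with \cref{prop:dually-chordal-c2}; your checks on the nontriviality of \(C\) and the universality of the constant just spell out what the paper leaves implicit.
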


\begin{proof}
By \cref{thm:s-line-characterization} applied componentwise as in
\cref{thm:native-tree-spanner}, the graph \(G_C\) is dually chordal.  The
claim follows from \cref{prop:dually-chordal-c2}.
\end{proof}

The upper bound is stronger than what we obtained by embedding the join tree as there is no parameter analogous to \(D_s(C,J)\).  The next section shows that the \(\sqrt{\log\log n}\) upper bound is tight.

\section{Tightness and Bounded-Rank Realizations}
\label{sec:tightness}

The upper bound in \cref{thm:hypergraph-upper} is obtained by comparing the
native metric to a tree metric.  We now show that every finite tree metric occurs as the native \(s\)-line metric of an \(\alpha\)-acyclic hypergraph.  This transfers the classical lower bound for complete binary trees directly into the hypergraph setting.

\subsection{Realizing arbitrary tree metrics}

\begin{theorem}[Tree realization]
\label{thm:tree-realization}
Let \(Q=(U,F)\) be any finite tree and fix \(s\ge1\).  There exists an
\(\alpha\)-acyclic hypergraph \(H_{Q,s}\) with one hyperedge \(E_u\) for each
\(u\in U\) such that
\[
  \Ls(H_{Q,s})\cong Q.
\]
Moreover,
\[
  |E_u|=1+s\deg_Q(u)
\]
for every \(u\in U\), and thus
\[
  \rank(H_{Q,s})\le 1+s\Delta(Q).
\]
\end{theorem}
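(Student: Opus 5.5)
The plan is to reuse the construction from Proposition~\ref{prop:general-graph-realization}, specialized to the tree \(Q\), and then to check \(\alpha\)-acyclicity by exhibiting \(Q\) itself as a join tree. For each edge \(uv\in F\) we introduce a block \(B_{uv}\) of exactly \(s\) fresh vertices, with all blocks pairwise disjoint. For each \(u\in U\) we introduce a private vertex \(p_u\), and we set
\[
  E_u:=\{p_u\}\cup\bigcup_{uv\in F}B_{uv}.
\]
Since the blocks and private vertices are pairwise disjoint, \(|E_u|=1+s\deg_Q(u)\). Taking the maximum over \(u\) gives \(\rank(H_{Q,s})\le 1+s\Delta(Q)\). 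The private vertices keep the hyperedges distinct, which matters only in degenerate cases such as a single-vertex tree, where \(E_u=\{p_u\}\).

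Next we verify that \(\Ls(H_{Q,s})\cong Q\). This is exactly the intersection computation from the proof of Proposition~\ref{prop:general-graph-realization}. If \(uv\in F\), then \(E_u\cap E_v=B_{uv}\), which has size \(s\). If \(uv\notin F\), then \(E_u\cap E_v=\varnothing\), because each block \(B_{xy}\) lies only in \(E_x\) and \(E_y\), and private vertices are never shared. Hence \(u\mapsto E_u\) is an isomorphism \(Q\to\Ls(H_{Q,s})\).

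The only new step is \(\alpha\)-acyclicity, and we expect it to be easy. We take \(J\) to be the tree on \(\{E_u:u\in U\}\) obtained by transporting \(Q\) along \(u\mapsto E_u\). We then check Definition~\ref{def:join-tree} vertex by vertex:
\begin{enumerate}[label=(\roman*)]
  \item a private vertex \(p_u\) lies only in \(E_u\), a single node, which is trivially connected;
  \item a block vertex in \(B_{uv}\) lies in exactly \(E_u\) and \(E_v\), and these two nodes are adjacent in \(J\) because \(uv\in F\).
\end{enumerate}
So every set \(E_x\) induces a connected subtree of \(J\). Thus \(J\) is a join tree, and \(H_{Q,s}\) is \(\alpha\)-acyclic.

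The mild subtlety is that the general construction of Proposition~\ref{prop:general-graph-realization} fails to be acyclic for arbitrary \(G\): a cycle in \(G\) gives no tree in which every block pair is adjacent. The argument above works precisely because \(G=Q\) is itself a tree.
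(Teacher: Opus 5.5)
Your proposal is correct and follows the paper's proof essentially step for step: the same edge-block and private-vertex construction, the same intersection computation giving \(\Ls(H_{Q,s})\cong Q\), and the same check that \(Q\) itself is a join tree. Your closing remark is also right: the construction of Proposition~\ref{prop:general-graph-realization} cannot be \(\alpha\)-acyclic when \(G\) has a cycle, because each block vertex forces its two hyperedges to be adjacent in any join tree.
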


\begin{proof}
For every tree edge \(uv\in F\), create a fresh block \(S_{uv}\) of \(s\) vertices, with all created blocks pairwise disjoint. For each tree vertex
\(u\in U\), also create a private vertex \(p_u\). Define
\[
  E_u
  :=
  \{p_u\}
  \cup
  \bigcup_{uv\in F}S_{uv}.
\]

If \(u\) and \(v\) are adjacent in \(Q\), then the only shared block is
\(S_{uv}\), so
\[
  E_u\cap E_v=S_{uv}
  \qquad\text{and hence}\qquad
  |E_u\cap E_v|=s.
\]
If \(u\) and \(v\) are not adjacent, no tree edge is incident to both
vertices, and the blocks and private vertices are pairwise disjoint. Thus
\[
  E_u\cap E_v=\varnothing.
\]
Therefore
\[
  E_uE_v\in E(\Ls(H_{Q,s}))
  \iff
  uv\in F,
\]
so \(\Ls(H_{Q,s})\cong Q\).

To verify \(\alpha\)-acyclicity, use \(Q\) as a tree on the hyperedges
\(\{E_u:u\in U\}\).  Every ground vertex in a block \(S_{uv}\) belongs only to \(E_u\) and \(E_v\), which are adjacent in \(Q\); each private vertex
\(p_u\) belongs only to \(E_u\).  Thus the running intersection property holds,
so \(Q\) is a join tree for \(H_{Q,s}\).

Lastly, \(E_u\) contains its one private vertex with one disjoint
\(s\)-block for each edge incident to \(u\), giving
\[
  |E_u|=1+s\deg_Q(u).
\]
The rank bound follows immediately.
\end{proof}

\subsection{Matching lower bound}

Bourgain's work on metric characterizations of superreflexivity implies the
classical lower bound for complete binary trees in Hilbert space
\cite{bourgain1986superreflexivity}; a proof was later given by
Linial and Saks \cite{linial2003binary}. If \(B_h\) denotes the complete
binary tree of depth \(h\), then
\[
  c_2(B_h)=\Omega(\sqrt{\log h}).
\]
Since \(|V(B_h)|=2^{h+1}-1\), this is equivalently
\[
  c_2(B_h)
  =\Omega\!\left(\sqrt{\log\log |V(B_h)|}\right).
\]

For \(n\ge1\), define the worst-case distortion
\[
  \Ds(n)
  :=
  \sup\left\{
    c_2(C,d_s):
    \begin{array}{l}
      H\text{ is }\alpha\text{-acyclic},\\
      C\text{ is a connected component of }\Ls(H),\\
      |C|=n
    \end{array}
  \right\}.
\]

\begin{theorem}[Tight worst-case Euclidean distortion]
\label{thm:tight-distortion}
For every fixed \(s\ge1\), as \(n\to\infty\),
\[
  \Ds(n)
  =
  \Theta\!\left(\sqrt{\log\log n}\right).
\]
\end{theorem}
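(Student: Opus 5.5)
The proof splits into an upper bound and a lower bound, each assembled from results already established.

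\emph{Upper bound.} For $n\ge3$, every quantity in the supremum defining $\Ds(n)$ is at most $K\sqrt{\log\log n}$ by \cref{thm:hypergraph-upper}, with a universal constant $K$. Hence $\Ds(n)=O(\sqrt{\log\log n})$. For small $n$ the supremum is over finitely many graph metrics, up to isometry, so it is finite, and the asymptotic statement is unaffected.

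\emph{Lower bound.} The difficulty is that $n$ ranges over all integers, while complete binary trees exist only at sizes $2^{h+1}-1$. Given $n$, choose the largest $h$ with $2^{h+1}-1\le n$, so $2^{h+1}-1>(n-1)/2$ and $h=\Theta(\log n)$. Build a tree $Q_n$ on exactly $n$ vertices by attaching a path of $n-(2^{h+1}-1)$ new vertices to a leaf of $B_h$. This keeps $\Delta(Q_n)\le3$.

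Apply \cref{thm:tree-realization} to $Q_n$. This gives an $\alpha$-acyclic hypergraph $H_{Q_n,s}$ of rank at most $3s+1$ with $\Ls(H_{Q_n,s})\cong Q_n$. Since $Q_n$ is connected, the whole line graph is one component $C$ with $|C|=n$, and $(C,d_s)$ is isometric to $Q_n$ with its path metric.

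Next, $B_h$ is an isometric subspace of $Q_n$, because distances between vertices of $B_h$ are unchanged when a pendant path is attached. Distortion is monotone under restriction to subspaces, so
\[
\Ds(n)\ge c_2(Q_n)\ge c_2(B_h)=\Omega(\sqrt{\log h})=\Omega(\sqrt{\log\log n}),
\]
using the Bourgain/Linial--Saks bound quoted above together with $h=\Theta(\log n)$.

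Combining the two bounds gives $\Ds(n)=\Theta(\sqrt{\log\log n})$. The rank refinement follows because every witness used in the lower bound has rank at most $3s+1$.

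The only step needing care is the padding to arbitrary $n$. The isometric-subspace monotonicity and the relation $h=\Theta(\log n)$ handle it. All other steps are direct invocations of earlier results.
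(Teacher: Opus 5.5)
Your proof is correct and follows the paper's argument essentially step for step: the upper bound comes from \cref{thm:hypergraph-upper}, and the lower bound pads \(B_h\) with a pendant path to get an \(n\)-vertex tree \(Q_n\), realizes \(Q_n\) via \cref{thm:tree-realization}, and uses that \(B_h\) is an isometric subspace of \(Q_n\). Your explicit estimate \(2^{h+1}-1>(n-1)/2\) and your remark that small \(n\) give a finite supremum are slightly more careful than the paper's bare \(N_h=\Theta(n)\), but the route is the same.
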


\begin{proof}
The upper bound
\[
  \Ds(n)=O\!\left(\sqrt{\log\log n}\right)
\]
follows from \cref{thm:hypergraph-upper}.

For the lower bound, first consider \(n=N_h:=2^{h+1}-1\).  By
\cref{thm:tree-realization}, there exists an \(\alpha\)-acyclic hypergraph
\(H_{B_h,s}\) for which
\[
  \Ls(H_{B_h,s})\cong B_h.
\]
Therefore the binary-tree lower bound
\cite{bourgain1986superreflexivity,linial2003binary} gives
\[
  \Ds(N_h)
  \ge c_2(B_h)
  =\Omega\!\left(\sqrt{\log\log N_h}\right).
\]

For an arbitrary sufficiently large \(n\), choose the largest \(h\) with
\(N_h\le n\).  Then \(N_h=\Theta(n)\).  Form an \(n\)-vertex tree \(Q_n\) by
attaching a path of length \(n-N_h\) to a leaf of \(B_h\).  The original
\(B_h\) remains an isometric subspace of \(Q_n\): because \(Q_n\) is a tree,
the unique path between two vertices of \(B_h\) is unchanged by attaching the
new pendant path.  Then, restricting any embedding of \(Q_n\) to this isometric
subspace gives
\[
  c_2(Q_n)\ge c_2(B_h)
  =\Omega\!\left(\sqrt{\log\log n}\right).
\]
Applying \cref{thm:tree-realization} to \(Q_n\) realizes this metric as an
\(s\)-line metric of an \(\alpha\)-acyclic hypergraph.  So
\[
  \Ds(n)=\Omega\!\left(\sqrt{\log\log n}\right),
\]
which matches the upper bound.
\end{proof}

\subsection{The lower bound already holds at bounded rank}

The construction above shows that the obstruction does not depend on
hyperedges whose size grows with \(n\).

\begin{corollary}[Bounded-rank lower bound]
\label{cor:bounded-rank-lower}
Fix \(s\ge1\).  For every sufficiently large \(n\), there exists an
\(\alpha\)-acyclic hypergraph \(H\) of rank at most
\[
  3s+1
\]
and a connected component \(C\) of \(\Ls(H)\) with \(|C|=n\) such that
\[
  c_2(C,d_s)
  =\Omega\!\left(\sqrt{\log\log n}\right).
\]
So, imposing the rank bound \(\rank(H)\le3s+1\) does not improve the
worst-case asymptotic Euclidean distortion for fixed \(s\).
\end{corollary}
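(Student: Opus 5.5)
The plan is to reuse the lower-bound construction from the proof of \cref{thm:tight-distortion} and simply track the maximum degree of the tree being realized. For sufficiently large \(n\), take \(h\) maximal with \(N_h=2^{h+1}-1\le n\). Form the tree \(Q_n\) by attaching a pendant path with \(n-N_h\) new vertices to a leaf of the complete binary tree \(B_h\), so that \(|V(Q_n)|=n\). Then apply \cref{thm:tree-realization} to \(Q_n\). This produces an \(\alpha\)-acyclic hypergraph \(H:=H_{Q_n,s}\) with \(\Ls(H)\cong Q_n\). Since \(Q_n\) is connected, \(\Ls(H)\) has a single component \(C\), and \(|C|=n\).

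Next, bound the rank. In \(B_h\), the root has degree \(2\), internal vertices have degree \(3\), and leaves have degree \(1\). Attaching the pendant path at a leaf raises that leaf's degree to \(2\), and the path vertices have degree at most \(2\), so \(\Delta(Q_n)\le3\). By the identity \(|E_u|=1+s\deg_{Q_n}(u)\) in \cref{thm:tree-realization}, this gives \(\rank(H)\le1+3s\). The one point needing care is to attach the path at a leaf, not at the root or an internal vertex; attaching at an internal vertex would create degree \(4\) and break the rank bound.

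Finally, derive the distortion bound. Because \(\Ls(H)\cong Q_n\), the metric \((C,d_s)\) is isometric to the shortest-path metric of \(Q_n\). As noted in the proof of \cref{thm:tight-distortion}, \(B_h\) sits isometrically inside \(Q_n\), so restricting any embedding gives \(c_2(C,d_s)\ge c_2(B_h)=\Omega(\sqrt{\log h})\). Maximality of \(h\) gives \(n<N_{h+1}\le 4N_h\), hence \(h=\Theta(\log n)\) and \(\sqrt{\log h}=\Omega(\sqrt{\log\log n})\). Combined with the upper bound of \cref{thm:hypergraph-upper}, this shows that the worst case under the constraint \(\rank(H)\le3s+1\) is still \(\Theta(\sqrt{\log\log n})\).

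There is no real obstacle here; the whole argument is bookkeeping. The only step that requires attention is the degree check on the augmented tree in the second paragraph.
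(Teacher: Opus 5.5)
Your proposal is correct and follows essentially the same route as the paper. Both arguments reuse the tree \(Q_n\) from the proof of \cref{thm:tight-distortion}, check \(\Delta(Q_n)\le 3\), apply the identity \(|E_u|=1+s\deg_{Q_n}(u)\) from \cref{thm:tree-realization}, and inherit the \(\Omega(\sqrt{\log\log n})\) bound from the isometric copy of \(B_h\). (Minor aside: attaching the path at the root would also be harmless, since the root of \(B_h\) has degree \(2\); only the non-root internal vertices would push the degree to \(4\).)
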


\begin{proof}
Use the tree \(Q_n\) from the proof of \cref{thm:tight-distortion}.  A complete
binary tree has maximum degree at most \(3\), and attaching a path to one of
its leaves preserves the maximum-degree bound \(\Delta(Q_n)\le3\).  By
\cref{thm:tree-realization}, the corresponding hypergraph satisfies
\[
  \rank(H_{Q_n,s})
  \le 1+s\Delta(Q_n)
  \le 3s+1.
\]
Its native \(s\)-line metric is exactly \(Q_n\), whose Euclidean distortion is
\(\Omega(\sqrt{\log\log n})\) by the proof of
\cref{thm:tight-distortion}.
\end{proof}

Thus, \cref{thm:hypergraph-upper,thm:tight-distortion} identify the
correct worst-case order of Euclidean distortion for native \(s\)-line metrics
of \(\alpha\)-acyclic hypergraphs. The bounded-rank construction in
Corollary~\ref{cor:bounded-rank-lower} shows that the lower bound still holds when the hyperedges have size bounded as a function of the threshold \(s\).

\section{Concluding Remarks}

We have determined the worst-case Euclidean distortion of native \(s\)-line metrics arising from \(\alpha\)-acyclic hypergraphs. For every fixed \(s\geq 1\), the asymptotic order is
\[
\Theta\!\left(\sqrt{\log\log n}\right).
\]
This separates the acyclic setting from general hypergraphs, whose native \(s\)-line metrics recover arbitrary finite graph metrics and have worst-case distortion \(\Theta(\log n)\).

We note that the improved geometry does not follow directly from the canonical tree representation of an \(\alpha\)-acyclic hypergraph. The relevant structure instead appears after passing to the \(s\)-line graph itself where, for every threshold \(s\), the \(s\)-line graphs of \(\alpha\)-acyclic hypergraphs are exactly the dually chordal graphs. Their constant-factor spanning-tree approximations then allow the Euclidean geometry of tree metrics to be transferred to the hypergraph setting. Understanding when these two notions of tree-likeness coincide, and when they diverge, seems to be a useful direction for the broader study of hypergraph metrics. The matching lower bound comes from our work that every finite tree can be realized exactly as the native \(s\)-line metric of an \(\alpha\)-acyclic hypergraph. Thus, the known obstruction given by complete binary trees already occurs inside this hypergraph class. Further, the realization can be carried out with rank at most \(3s+1\), showing that the lower bound is not caused by hyperedges whose sizes grow with the number of hyperedges.

However, what happens under other notions of hypergraph acyclicity remains open. Determining which other acyclic or approximately acyclic classes admit improved Euclidean embedding bounds may reveal a broader connection between structural decompositions of hypergraphs and the geometry of their higher-order line metrics. A more general question is whether there exists a class of hypergraphs whose Euclidean embedding does not match that of its general graph analogue. 

The characterization by dually chordal graphs also suggests more refined metric questions. Our argument uses the existence of a multiplicative \(4\)-spanning tree, which is sufficient for the asymptotic Euclidean bound but does not attempt to optimize the tree approximation itself. It would be interesting to determine which \(s\)-line graphs arising from \(\alpha\)-acyclic hypergraphs admit better multiplicative or additive tree approximations.

In addition, since the main reduction passes through tree metrics,
analogous questions can be posed for embeddings into other Banach spaces
or into low-dimensional Euclidean spaces, as well as for algorithmic
problems whose guarantees depend on tree-like metric structure.

\section*{Acknowledgments}
I thank Fan Wei for her guidance and encouragement throughout this project,
and for helpful feedback on the manuscript.

\end{document}